\documentclass[letterpaper, 10 pt, conference]{ieeeconf}  

\IEEEoverridecommandlockouts                              

\usepackage{amsmath} 
\usepackage{amssymb}
\usepackage{mathtools}

\usepackage{xcolor}
\usepackage{pgfplots}
\usepackage{tikz}
\usetikzlibrary{decorations.pathmorphing,arrows.meta,calc,angles,quotes,positioning}

\usepackage[isbn=false, doi=false, maxnames=99, backend=biber, date=year]{biblatex}
\newcommand{\vect}[1]{\boldsymbol{#1}}
\newcommand{\mat}[1]{\boldsymbol{#1}}

\newcommand{\vx}{\vect{x}}
\newcommand{\vu}{\vect{u}}
\newcommand{\vy}{\vect{y}}
\newcommand{\vv}{\vect{v}}

\newcommand{\ve}{\vect{e}}
\newcommand{\veta}{\vect{\eta}}

\newcommand{\bx}{\bar{x}}
\newcommand{\bu}{\bar{u}}
\newcommand{\bvx}{\bar{\vx}}
\newcommand{\bvu}{\bar{\vu}}

\newcommand{\ba}{\bar{a}}
\newcommand{\ta}{\tilde{a}}

\newcommand{\mOmega}{\mat{\Omega}}
\newcommand{\mA}{\mat{A}}
\newcommand{\mV}{\mat{V}}
\newcommand{\mLambda}{\mat{\Lambda}}
\newcommand{\tmLambda}{\tilde \mLambda}

\newcommand{\mB}{\mat{B}}
\newcommand{\mH}{\mat{H}}

\newcommand{\mK}{\mat{K}}

\newcommand{\mI}{\mat{I}}

\newcommand{\mM}{\mat{M}}

\newcommand{\mcalT}{\mat{\mathcal{T}}}

\newcommand{\Char}{\phi}
\newcommand{\invChar}{\psi}

\newcommand{\Reals}{\ensuremath{\mathbb{R}}}

\newcommand{\dens}{\varrho}

\DeclareMathOperator{\diagC}{\mathrm{diag}}

\newcommand{\dm}{\mathrm{d}}
\newcommand{\tint}{\textstyle \int}

\definecolor{myBlue}{RGB}{0, 0, 153} 		
\definecolor{myRed}{RGB}{255, 0, 0} 
\definecolor{myGreen}{RGB}{0, 153, 0} 		

\newtheorem{remark}{Remark}
\newtheorem{theorem}{Theorem}
\newtheorem{Lemma}{Lemma}

\usepackage{environ}
\NewEnviron{subalign}[1][]{%
	\begin{subequations}
		\if\relax\detokenize{#1}\relax\else\label{#1}\fi
		\begin{align}
			\BODY
		\end{align}
	\end{subequations}
}

\title{\LARGE \bf
Flatness-based control of a Timoshenko beam
}

\author{\underline{Simon Schmidt}$^1$, Nicole Gehring$^2$, Abdurrahman Irscheid$^3$
\thanks{
	This research was funded in part by the Austrian Science Fund (FWF) [I~6519-N] as well as the Deutsche Forschungsgemeinschaft (DFG, German Research Foundation) – project no. 517291864.
	For open access purposes, the author has applied a CC BY public copyright license to any author accepted manuscript version arising from this submission.}
\thanks{$^{1}$Simon Schmidt is with the Institute of Control Systems,
		Johannes Kepler University Linz, Linz, Austria
        {\tt\small simon.schmidt@jku.at}}%
\thanks{$^{2}$Nicole Gehring is with the Chair of Systems Theory and Control Engineering,
		Otto von Guericke University Magdeburg, Magdeburg, Germany
        {\tt\small nicole.gehring@ovgu.de}}%
\thanks{$^{3}$Abdurrahman Irscheid is with the Chair of Systems Theory and Control Engineering,
        Saarland University, Saarbrücken, Germany
        {\tt\small a.irscheid@lsr.uni-saarland.de}}%
}

\begin{document}

\maketitle
\thispagestyle{empty}
\pagestyle{empty}

\begin{abstract}

The paper presents an approach to flatness-based control design for hyperbolic multi-input systems, building upon the hyperbolic controller form (HCF).
The transformation into HCF yields a simplified system representation that considerably facilitates the design of state feedback controllers for trajectory tracking.
The proposed concept is demonstrated for a Timoshenko beam and validated through numerical simulations, demonstrating trajectory tracking and closed-loop stability.

\end{abstract}


\section{Introduction}

The flatness-based control of nonlinear ordinary differential equations (ODEs) is based on the parametrization of all system variables by a flat output \cite{FLIESS1995}.
The parametrization of the system's input allows for the derivation of a system description in controller canonical form or Brunovsky normal form, state representations that significantly simplify the design of feedforward and state feedback (tracking) controllers.
This approach also extends to infinite-dimensional systems.
For hyperbolic single-input (SI) systems with dynamic boundary conditions, a flatness-based control concept is introduced in \cite{Woittennek2012a}.
Its central element is the transformation of the system into hyperbolic controller form (HCF), a special state representation in which the system dynamics are represented by a single boundary-actuated transport equation \cite{Russell1978}, analogous to the chain of integrators in the finite-dimensional case.
In HCF, the control design becomes straightforward, and the resulting control law can be readily expressed as feedback of the original system state through the corresponding, invertible state transformation.

Generalizing the flatness-based control design from SI to multi-input (MI) systems poses additional challenges, similar to the ODE case.
In the latter, the controller form is no longer unique \cite{Isidori1985} as it may be represented by integrator chains of different length.
Moreover, it may involve time derivatives of the input, rendering it a generalized state representation \cite{Rudolph2021}.
In this case, a purely static state feedback is generally insufficient for trajectory tracking.
Similar considerations arise for hyperbolic MI systems, as discussed in \cite{Schmidt2025} for boundary-actuated systems without in-domain coupling.

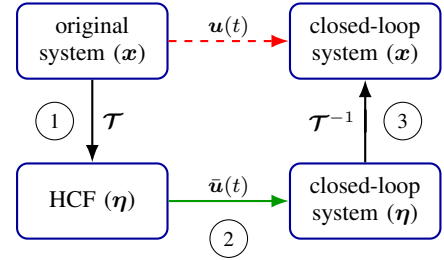
\begin{figure}[t]
	\centering
	\begin{tikzpicture}[
	>=Latex,
	node distance=11mm and 16mm,
	block/.style={draw=myBlue, rounded corners, minimum width=20mm, minimum height=10mm, align=center, thick},
	circ/.style={circle, draw=black, inner sep=1pt, minimum size=5mm, font=\footnotesize, fill=white, text=black, thin},
	lab/.style={font=\footnotesize, inner sep=1pt, text=black}
	]
	
	\small
	
	\node[block] (orig) {original \\ system ($\vx$)};
	\node[block, right=of orig] (CLtop) {closed-loop \\ system ($\vx$)};
	\node[block, below=of orig] (HCF) {HCF ($\veta$)};
	\node[block, right=of HCF] (CLbot) {closed-loop \\ system ($\veta$)};
	
	\draw[dashed, ->, thick, myRed] (orig) -- node[lab, above] {$\vu(t)$} (CLtop);
	
	\draw[->, thick] (orig) -- node[lab, right] {$\; \mcalT$}
	node[circ, xshift=-5mm] {1} (HCF);
	
	\draw[->, thick, myGreen] (HCF) -- node[lab, above] {$\bvu(t)$} 
	node[circ, yshift=-5mm] {2} (CLbot);
	
	\draw[->, thick] (CLbot.north) -- ++(0,7mm) -- node[lab, left] {$\mcalT^{-1} \;$}
	node[circ, xshift=5mm] {3} ([yshift=-7mm]CLtop.south) -- (CLtop.south);
\end{tikzpicture}%
	\caption{Schematic of the HCF-based control design.}
	\label{fig_BSB}
	\vspace*{-10pt}
\end{figure}

This paper extends the flatness-based control design from \cite{Schmidt2025} to hyperbolic MI systems with additional in-domain coupling.
The proposed concept relies on the transformation into HCF, a system representation that is advantageous for feedback design (see Figure~\ref{fig_BSB}).  
This is somewhat similar to the well-known \textit{backstepping} method \cite{Krstic2008}, which also employs a state transformation to simplify the control design.
The main challenge of the HCF-based method lies in identifying a suitable candidate for the HCF state.
As the flat output of systems with multiple inputs is not unique, the same applies to the HCF state \cite{Schmidt2025}.  
Moreover, the HCF is no longer a state representation in the classical sense, in general, as it involves input predictions.
Therefore, a non-static feedback of the HCF state is derived for trajectory tracking.
It is shown that the transformation into HCF is invertible, allowing to express the control law in terms of the original system state.

The flatness-based control concept is illustrated for a Timoshenko beam \cite{Timoshenko1928}, which is governed by two coupled wave equations.
In the studied setup, the beam is clamped at one end and actuated at the opposite end by a boundary force and torque serving as the control inputs (see Figure~\ref{fig_beam}).
This is a prototypical hyperbolic MI system, for which numerous control strategies exist in the literature, including direct boundary assignment \cite{Kim1987}, flatness-based feedforward tracking control \cite{Becker2006}, backstepping approaches \cite{Krstic2006, Redaud2022, Chen2024}, and control in a distributed port-Hamilton framework \cite{Macchelli2004}.

The paper is structured as follows.
Section~\ref{sec_problem} introduces the Timoshenko beam model and specifies the control objective.
In Section~\ref{sec_derivFlatPar} a flatness-based parametrization is derived, from which the HCF is explicitly constructed in Section~\ref{sec_HCF}.
A tracking controller is derived in Section~\ref{sec_ctrl}, ensuring the control objectives are met.
Finally, the results are verified through simulation in Section~\ref{sec_sim}.

\emph{Notation}:
For partitioned vectors $\vv \in \Reals^{n_- + n_+}$ and matrices $\mM \in \Reals^{(n_- + n_+) \times (n_- + n_+)}$, we write
\begin{equation}
	\vv = \begin{bmatrix} \vv^- \\ \vv^+ \end{bmatrix}
	\quad \text{and} \quad 
	\mM = \begin{bmatrix} \mM^{--} & \mM^{-+} \\ \mM^{+-} & \mM^{++} \end{bmatrix}.
\end{equation}
We use a shorthand notation $(\cdot)^\mp$ to compactly refer to expressions that apply to both $(\cdot)^-$ and $(\cdot)^+$.


\section{Problem statement}
\label{sec_problem}

We consider a Timoshenko beam without internal damping, which is described by two coupled wave equations, i.e., second-order partial differential equations (PDEs)
\begin{subalign}[eq_waveeqs]
	\dens(z) S(z) \, \partial_t^2 w(z,t) &= \partial_z \big( \kappa(z) \big( \partial_z w (z,t) - \varphi(z,t) \big) \big) \\
	J(z) \, \partial_t^2 \varphi (z,t) &= \partial_z \big( E(z)I(z) \, \partial_z \varphi (z,t) \big) \nonumber \\
	& \phantom{=} + \kappa(z) \big( \partial_z w(z,t) - \varphi(z,t) \big),
\end{subalign}
where $w(z,t)$ denotes the transverse displacement along the beam axis and $\varphi(z,t)$ is the rotation angle of the normal of the cross-section relative to the beam axis (see Figure~\ref{fig_beam}).
Without loss of generality, we assume $w(z,t)$ and $\varphi(z,t)$ defined on $[0,1] \times [0,\infty)$, which can always be achieved by normalization of the spatial domain.
In \eqref{eq_waveeqs}, $\dens(z)$ denotes the (linear) density, $\kappa(z) = k_s(z) G(z) S(z)$ the product of shear correction factor $k_s(z)$, shear modulus $G(z)$ and cross-section area $S(z)$, $J(z)=\dens(z) I(z)$ the rotary moment of inertia, $I(z)$ the moment of inertia, and $E(z)$ Young's modulus of elasticity.
The beam is assumed to be clamped at $z=0$, i.e., $w(0,t) = 0$ and $\varphi(0,t)=0$.
This implies boundary conditions (BCs)
\begin{equation}
	\label{eq_BCwave_0}
	\partial_t w(0,t) = 0
	\quad \text{and} \quad
	\partial_t \varphi(0,t) = 0.
\end{equation} 
for the corresponding velocities, which are used for the first-order representation in Section~\ref{sec_trafos}.
The end at $z=1$ is fully actuated, leading to the BCs
\begin{subalign}[eq_BCwave_1]
	\kappa(1) \big( \partial_z w(1,t) - \varphi(1,t) \big) &= u_1(t) \\
	E(1)I(1) \, \partial_z \varphi(1,t) &= u_2(t),
\end{subalign}
where the input components $u_1(t)$ and $u_2(t)$ represent a force and a bending moment, respectively.
We consider general initial conditions (ICs) $w(z,0) = \bar w_0(z)$, $\varphi(z,0) = \bar \varphi_0(z)$ and $\partial_t w(z,0) = \bar w_1(z)$, $\partial_t \varphi(z,0) = \bar \varphi_1(z)$.

The objective is to design a state feedback controller that stabilizes the transition of the Timoshenko beam \eqref{eq_BCwave_0}--\eqref{eq_BCwave_1} from an initial stationary configuration $(w(z,t_0) ,\, \varphi(z,t_0)) = (w_0(z) ,\, \varphi_0(z))$ to a desired stationary configuration $(w(z,t_T) ,\, \varphi(z,t_T)) = (w_T(z) ,\, \varphi_T(z))$ in finite time $T = t_T - t_0 \in \Reals^+$.
The control design proceeds in three steps, as illustrated in Figure~\ref{fig_BSB}:
\begin{enumerate}
	\item 
	The HCF for the Timoshenko beam is derived from a flatness-based parametrization of \eqref{eq_waveeqs}--\eqref{eq_BCwave_1}.
	To this end, a sequence of transformations is applied to the hyperbolic MI system to map it into a form beneficial for this purpose.
	
	\item For the system in HCF, the subsequent design of a stabilizing tracking controller is straightforward.
	
	\item The inverse HCF transformation is used to express the control law as a state feedback of the beam variables.
\end{enumerate}

\begin{figure}
	\centering
	\def\myscale{.9}

\begin{tikzpicture}
	[scale=\myscale,
	axis/.style={thick,->,>=Stealth},
	beam/.style={thick,myBlue}]
	
	\small
	
	\def\L{4} 		
	\def\defl{0.8} 	
	\def\h{1} 		
	
	\pgfmathsetmacro{\dL}{0.25} 				
	\pgfmathsetmacro{\z}{2.8} 	
			
	\pgfmathsetmacro{\yz}{1} 		
	\pgfmathsetmacro{\lt}{0.1} 		
	
	\pgfmathsetmacro{\ka}{1.45}
	\pgfmathsetmacro{\kb}{1.08*\ka}

	\pgfmathdeclarefunction{f}{1}{\pgfmathparse{-1*\defl*(#1/\L)^2}}
	\pgfmathdeclarefunction{df}{1}{\pgfmathparse{-2*\defl*#1/(\L^2)}}
	
	\pgfmathdeclarefunction{xo}{1}{\pgfmathparse{#1 - \h/2*df(#1)/(1 + df(#1)^2)^(1/2)}}
	\pgfmathdeclarefunction{yo}{1}{\pgfmathparse{f(#1) + \h/2*1/(1 + df(#1)^2)^(1/2)}}
	
	\pgfmathdeclarefunction{xu}{1}{\pgfmathparse{#1 + \h/2*df(#1)/(1 + df(#1)^2)^(1/2)}}
	\pgfmathdeclarefunction{yu}{1}{\pgfmathparse{f(#1) - \h/2*1/(1 + df(#1)^2)^(1/2)}}

	\pgfmathsetmacro{\xLop}{xo(\L+\dL)}
	\pgfmathsetmacro{\yLop}{yo(\L+\dL)}
	
	\pgfmathsetmacro{\xLum}{xu(\L-\dL)}		
	\pgfmathsetmacro{\yLum}{yu(\L-\dL)}

	\pgfmathsetmacro{\LL}{(xo(\L+\dL) + xu(\L-\dL))/2}
	
	\pgfmathsetmacro{\xLo}{xo(\LL)}
	\pgfmathsetmacro{\yLo}{yo(\LL)}
	
	\pgfmathsetmacro{\xLu}{xu(\LL)}
	\pgfmathsetmacro{\yLu}{yu(\LL)}

	\pgfmathsetmacro{\xzop}{xo(\z+\dL)}
	\pgfmathsetmacro{\yzop}{yo(\z+\dL)}
	
	\pgfmathsetmacro{\xzum}{xu(\z-\dL)}
	\pgfmathsetmacro{\yzum}{yu(\z-\dL)}
	\pgfmathsetmacro{\xzuum}{xo(\z+\dL) - \ka*(xo(\z+\dL)-xu(\z-\dL))}
	\pgfmathsetmacro{\yzuum}{yo(\z+\dL) - \ka*(yo(\z+\dL)-yu(\z-\dL))}

	\pgfmathsetmacro{\zz}{(xo(\z+\dL) + xu(\z-\dL))/2}
	
	\pgfmathsetmacro{\xzo}{xo(\zz)}		
	\pgfmathsetmacro{\yzo}{yo(\zz)}
	
	\pgfmathsetmacro{\xzu}{xu(\zz)}
	\pgfmathsetmacro{\yzu}{yu(\zz)}
	\pgfmathsetmacro{\xzuu}{xo(\zz) - \kb*(xo(\zz)-xu(\zz))}
	\pgfmathsetmacro{\yzuu}{yo(\zz) - \kb*(yo(\zz)-yu(\zz))}

	\pgfmathsetmacro{\fLL}{f(\LL)}
	\pgfmathsetmacro{\fzz}{f(\zz)}

	\def\rC{\myscale*27}
	
	\pgfmathsetmacro{\dxt}{xo(\z+\dL)-xu(\z-\dL)}
	\pgfmathsetmacro{\dyt}{yo(\z+\dL)-yu(\z-\dL)}
	
	\pgfmathsetmacro{\dxn}{xo(\zz)-xu(\zz))}
	\pgfmathsetmacro{\dyn}{yo(\zz)-yu(\zz))}
	
	\pgfmathsetmacro{\rt}{0.45/sqrt(\dxt^2+\dyt^2)}
	\pgfmathsetmacro{\rn}{-500/sqrt(\dxn^2+\dyn^2)}
	
	\coordinate (C) at (\zz, \fzz);
	
	\coordinate (T) at ({(xo(\z+\dL)+xu(\z-\dL))/2 - \rt*\dxt},{(yo(\z+\dL)+yu(\z-\dL))/2 - \rt*\dyt}); 
	
	\coordinate (N) at ({(\xzuum-\xzop)/2 + \rn*\dxn},{(\yzuum-\yzop)/2 + \rn*\dyn});

	\pgfmathsetmacro{\alphaEnd}{atan(df(\L))}		
	\pgfmathsetmacro{\alphaMid}{atan(df(\zz))}		
	
	\def\rRA{0.2}

	\coordinate (Endpoint) at (\LL,\fLL);


	\fill[gray!40] (-0.4,-1) rectangle (0,1);
	\foreach \y in {-0.8,-0.4,0,...,0.8}
	\draw[thick] (-0.4,\y-0.2) -- (0,\y+0.2);
	\draw[thick] (0,1) -- (0,-1);
	
	\draw[axis] (-0.4,\yz) -- (\L+0.7,\yz) node[right]{$z$};
	\draw[thick] (0,\yz-\lt) -- (0,\yz+\lt) node[above]{$0$};
	\draw[thick] (\zz,\yz-\lt) -- (\zz,\yz+\lt) node[above]{$\bar z$};
	\draw[thick] (\LL,\yz-\lt) -- (\LL,\yz+\lt) node[above]{$1$};

	\draw[domain=0:\LL,samples=50,smooth,variable=\x,gray]
	plot({\x},{f(\x)});

	\draw[domain=0:\LL,samples=50,smooth,variable=\x,gray]%
	plot({xo(\x)},{yo(\x)})		
	plot({xu(\x)},{yu(\x)});	

	\draw[beam,domain=0:\L+\dL,samples=50,smooth,variable=\x]%
	plot({xo(\x)},{yo(\x)});		
	
	\draw[beam,domain=0:\L-\dL,samples=50,smooth,variable=\x]%
	plot({xu(\x)},{yu(\x)});

	\draw[gray] (\xLo,\yLo) -- (\xLu,\yLu);			
	\draw[beam] (\xLop,\yLop) -- (\xLum,\yLum);		
	
	\draw[gray] (\xzo,\yzo) -- (\xzu,\yzu);					
	\draw[gray, thick]  (\xzop,\yzop) -- (\xzum,\yzum);	    

	\draw[gray, densely dashed] (\L+1.3,0) -- (0,0);
	\draw[gray, densely dashed] (\L+1.3,\fzz) -- (\zz,\fzz);
	\draw[<-,>=Stealth,thick] (\L+1.2,\fzz) -- (\L+1.2,0) node[right, xshift=3, yshift=-5]{$w(\bar z, t)$};

	\draw[gray, densely dashed]  (\xzuu,\yzuu) -- (\xzo,\yzo);			
	\draw[gray, densely dashed]  (\xzuum,\yzuum) -- (\xzop,\yzop);			
	
	\pic[draw=black,<-,>=Stealth,thick,angle radius=\rC,angle eccentricity=1.55,"{$\varphi(\bar z, t)$}"]{angle = T--C--N};
	

	\draw[gray]	([shift={(\alphaEnd+90:\rRA)}] \LL,\fLL)
	arc[start angle=\alphaEnd+90, end angle=\alphaEnd+180, radius=\rRA];
	
	\draw[gray]	([shift={(\alphaMid-90:\rRA)}] \zz,\fzz)
	arc[start angle=\alphaMid-90, end angle=\alphaMid, radius=\rRA];

	\fill[gray]	([shift={(\alphaEnd+135:\rRA*0.618)}] \LL,\fLL) circle[radius=0.5pt];		
	\fill[gray]	([shift={(\alphaMid-45:\rRA*0.618)}] \zz,\fzz) circle[radius=0.5pt];		

	\draw[<-,>=Stealth,thick,myGreen] ($(Endpoint)+(0,0.62)$) -- ++(0,0.8)
	node[midway,right=1pt,yshift=4pt,myGreen] {$u_1(t)$};
	
	\draw[<-,>=Stealth,myRed,thick] ([shift={(-0,-0.45)}]Endpoint)
	arc[start angle=-100,end angle=10,radius=0.5];
	\node[myRed] at ([shift={(0.9,-0.5)}]Endpoint) {$u_2(t)$};

\end{tikzpicture}
	%
	%
	\caption{A Timoshenko beam clamped at $z=0$ and actuated at $z=1$.}
	\label{fig_beam}
	%
\end{figure}

\section{Flatness-based parametrization}
\label{sec_derivFlatPar}

To derive a flatness-based parametrization for the Timoshenko beam, essentially, its solution is required.
To simplify this task, a series of transformations is applied in order to map the system into a form that is advantageous for solving the coupled PDEs \eqref{eq_waveeqs}.

\subsection{Preliminary transformations}
\label{sec_trafos}

We introduce the state $\vx(z,t)$ which allows rewriting the system \eqref{eq_waveeqs}--\eqref{eq_BCwave_1} of two second-order PDEs as a system of four first-order PDEs:
\begin{subalign}[eq_sys_x]
	\partial_t \vx^-(z,t) &= \mLambda(z) \partial_z \vx^-(z,t) \nonumber \\
	& \phantom{=} + \mA^{--}(z) \vx^-(z,t) + \mA^{-+}(z) \vx^+(z,t) \label{eq_sys_xm} \\ 
	\partial_t \vx^+(z,t) &= -\mLambda(z) \partial_z \vx^+(z,t) \nonumber \\
	& \phantom{=} + \mA^{+-}(z) \vx^-(z,t) + \mA^{++}(z) \vx^+(z,t) \label{eq_sys_xp} \\
	\vx^+(0,t) &= - \vx^-(0,t) \label{eq_BC_x_z0} \\
	\vx^-(1,t) &= \vx^+(1,t) + \mB \vu(t), \label{eq_BC_x_z1}
\end{subalign}
where $\mLambda(z) = \diagC(\lambda_1(z), \lambda_2(z))$, and $a_{ii}(z)=0$, $i=1,\dots,4$, for the diagonal elements of $\mA(z)$.
The specific expressions for $\vx(z,t)$, $\lambda_1(z)$, $\lambda_2(z)$, $\mA(z)$ and $\mB$ depend on the parameters of the Timoshenko beam, as detailed in Remark~\ref{rem_Riemann}.
The system \eqref{eq_sys_x} in Riemann coordinates comprises four heterodirectional transport equations: two propagating in the negative ($\dim \vx^-(z,t)=2$) and two in the positive $z$-direction ($\dim \vx^+(z,t)=2$).
Both subsystems \eqref{eq_sys_xm} and \eqref{eq_sys_xp} share the same absolute transport velocities $\lambda_1(z) > \lambda_2(z)>0$ as they are derived from the original pair of wave equations \eqref{eq_waveeqs}. 
{Since the transport velocities are sorted in decreasing order, the respective first transport processes associated with $x_1^\mp(z,t)$ are faster than those associated with $x_2^\mp(z,t)$.}
Defining strictly monotonically increasing positive functions $\Char_i(z) = \tint_0^z \tfrac{\dm \zeta}{\lambda_i (\zeta)}$, $i=1,2$,
together with their corresponding inverse functions $\invChar_i$ satisfying $\invChar_i(\Char_i(z))=z$ yields the system's transport times $\tau_i = \Char_i(1)$ between the boundaries $z=0$ and $z=1$, with $\tau_2 > \tau_1 > 0$.

\begin{remark}
	\label{rem_Riemann}
	The transformation%
	\begin{equation}
		\label{eq_tf_x_chi}
		\vx(z,t) = \mH^{-1}(z) \mV^{-1}(z) \begin{bmatrix}
			\partial_z w(z,t) - \varphi(z,t) \\ \partial_z \varphi(z,t) \\ \partial_t w(z,t) \\ \partial_t \varphi(z,t)
		\end{bmatrix},
	\end{equation}
	maps \eqref{eq_waveeqs}--\eqref{eq_BCwave_1} into \eqref{eq_sys_x}.
	The matrix%
	\begin{align}
		\mV(z)= \begin{bmatrix}
			\mLambda^{-1}(z) & -\mLambda^{-1}(z) \\
			\mI_2 & \mI_2
		\end{bmatrix},
	\end{align} 
	with the identity matrix $\mI_2 \in \Reals^{2 \times 2}$, is derived from the solution of an eigenvalue problem with eigenvalues $\pm\mu_1(z) = \pm\sqrt{\tfrac{\kappa(z)}{\dens(z)S(z)}}$ and $\pm\mu_2(z) = \pm\sqrt{\tfrac{E(z)}{\dens(z)}}$.
	The matrix $\mH(z) = \diagC(e^{\alpha_1(z)}, \, e^{\alpha_2(z)}, \, e^{\alpha_1(z)}, \, e^{\alpha_2(z)})$ with $\alpha_i(z) = \tint_0^z \tfrac{a_i^\prime(\zeta)}{2\lambda_i(\zeta)} \dm \zeta$, $i=1,2$, is introduced to ensure that $a_{ii}(z)=0$, $i=1,\dots,4$.
	Assuming $\mu_1(z)>\mu_2(z)$ yields transport velocities $\lambda_1(z) = \mu_1(z)$, $\lambda_2(z) = \mu_2(z)$, the coupling matrix $\mA(z) = \mH^{-1}(z) \mOmega(z) \mH(z)$ with%
	\begin{equation}
		\mOmega(z) = \tfrac 12 \begin{bmatrix}
			0 & -\lambda_1(z) & a_1(z) & -\lambda_1(z) \\
			a_3(z) & 0 & -a_3(z) & a_2(z) \\
			-a_1(z) & \lambda_1(z) & 0 & \lambda_1(z) \\
			a_3(z) & -a_2(z) & -a_3(z) & 0
		\end{bmatrix} \! ,
	\end{equation}
	where $a_1(z) = \lambda_1^\prime(z) - \tfrac{\kappa^\prime(z)}{\lambda_1(z)S(z)\dens(z)}$, $a_2(z) = \lambda_2^\prime(z) - \tfrac{\partial_z( E(z)I(z))}{\lambda_2(z)J(z)}$, $a_3(z)=\tfrac{\kappa(z)}{J(z)\lambda_1(z)}$, and the input matrix 
	$\mB = \diagC\Big( \tfrac{e^{\alpha_1(1)}}{\sqrt{\dens(1)S(1)\kappa(1)}}, \tfrac{e^{\alpha_2(1)}}{\sqrt{J(1)E(1)I(1)}} \Big)$
	in \eqref{eq_sys_x}.
	Similar expressions follow for the case $\mu_1(z) < \mu_2(z)$.
\end{remark}

%

An invertible%
\footnote{
	The kernel $\mK_I(z,\zeta) \in \Reals^{4 \times 4}$ of the inverse map ${\vx}(z,t) = \bvx(z,t) + \tint_0^z \mK_I(z,\zeta) \bvx(\zeta,t) \, \dm \zeta$ is obtained from the matrix-valued Volterra integral equation $\mK_I(z,\zeta) = \mK(z,\zeta) + \tint_\zeta^z \mK(z,\sigma) \mK_I(\sigma,\zeta) \, \dm \sigma$.
}
Volterra integral transformation
\begin{align}
	\label{eq_VIT}
	\bar{\vx}(z,t) = \vx(z,t) - \tint_0^z \mK(z,\zeta) \vx(\zeta,t) \, \dm \zeta,
\end{align}
with the kernel $\mK(z,\zeta) \in \Reals^{4 \times 4}$ defined on the triangular domain $\mathcal D = \{(z,\zeta) \in [0,1]^2 | \zeta \le z\}$, is applied, which maps the system \eqref{eq_sys_x} into the form 
\begin{subalign}[eq_sysBS]
	\partial_t \bvx^-(z,t) &= \mLambda(z) \partial_z \bvx^-(z,t) + \mA_0^-(z) \bvx^-(0,t)
	\label{eq_sysBSA}
	\\
	\partial_t \bvx^+(z,t) &= -\mLambda(z) \partial_z \bvx^+(z,t) + \mA_0^+(z) \bvx^+(0,t)
	\label{eq_sysBSB}
	\\
	\bvx^+(0,t) &= - \bvx^-(0,t) 
	\label{eq_sysBSC} \\
	\bvx^-(1,t) &= \bvu(t). \label{eq_sysBSD}
\end{subalign}
In contrast to \eqref{eq_sys_xm} and \eqref{eq_sys_xp}, the transport equations in \eqref{eq_sysBSA} and \eqref{eq_sysBSB} involve only local coupling through strictly lower triangular matrices
\begin{equation}
	\label{eq_A0}
	\mA_0^\mp(z) = \begin{bmatrix}
		0 & 0 \\
		a_0^\mp(z) & 0
	\end{bmatrix}.
\end{equation}
The new input $\bvu(t) \in \Reals^2$ is introduced via the state feedback%
\begin{align}
	\label{eq_fb_u}
	\vu(t) &= \mB^{-1} \Big( \bvu(t) - \vx^+(1,t) \\[-.5ex]
	& \phantom{=} + \tint_0^1 \mK^{--}(1,\zeta) \vx^-(\zeta,t) + \mK^{-+}(1,\zeta) \vx^+(\zeta,t) \, \dm \zeta \Big) \nonumber
\end{align}
to further simplify the system representation.

%

The kernel $\mK(z,\zeta)$ is the solution of the \emph{kernel equations}%
\begin{subalign}[eq_kernelEQS]
	\! \tmLambda(z) \partial_z \mK(z,\zeta) + \partial_\zeta ( \mK(z,\zeta) \tmLambda(z) ) &= \mK(z,\zeta) \mA(z) \label{eq_kernelEQS_PDE} \\
	(\mK^{--}(z,0) + \mK^{-+}(z,0)) \mLambda(0) &= \mA_0^-(z) \label{eq_kernelEQS_BC_0_1} \\
	-(\mK^{+-}(z,0) + \mK^{++}(z,0)) \mLambda(0) &= \mA_0^+(z) \label{eq_kernelEQS_BC_0_2} \\
	\mK(z,z) \tmLambda(z) - \tmLambda(z) \mK(z,z) &= \mA(z) \label{eq_kernelEQS_BC_z}
\end{subalign}
where $\tmLambda(z) = \diagC(\mLambda(z), -\mLambda(z))$.
Notably, \eqref{eq_kernelEQS} differs from the kernel equations commonly used for hyperbolic systems (e.g., \cite{Hu2019}), due to the modified coupling in \eqref{eq_sysBSB}, which involves the boundary value $\bvx^+(0,t)$ instead of $\bvx^-(0,t)$ as in the standard formulation.
Still, it can easily be verified that well-posedness of \eqref{eq_kernelEQS} is guaranteed and the existence of a unique piecewise $C^1(\mathcal D)$ solution $\mK(z,\zeta)$ is ensured for $\mLambda \in (C^1([0,1]))^{2 \times 2}$ and $\mA \in (C([0,1]))^{4 \times 4}$ by \cite{Hu2019}.


%


\subsection{Derivation of the parametrization}
\label{sec_flatPar}

The solution of the spatial Cauchy problem associated with the representation \eqref{eq_sysBS} is straightforward thanks to the cascaded transport equations in \eqref{eq_sysBSA} and \eqref{eq_sysBSB}.
Integration along the characteristics yields
\begin{subalign}[eq_sol]
	\bx_1^-(z,t) &= \bx_1^-(0,t + \Char_1(z)) \label{eq_sol_m1} \\
	\bx_2^-(z,t) &= \bx_2^-(0,t + \Char_2(z)) \nonumber \\
	& \phantom{=} - \tint_0^z \tfrac{a_0^-(\zeta)}{\lambda_2(\zeta)} \bx_1^-(0,t + \Char_2(z) - \Char_2(\zeta)) \, \dm \zeta \label{eq_sol_m2} \\
	\bx_1^+(z,t) &= \bx_1^+(0,t - \Char_1(z)) \label{eq_sol_p1} \\
	\bx_2^+(z,t) &= \bx_2^+(0,t - \Char_2(z)) \nonumber \\
	& \phantom{=} + \tint_0^z \tfrac{a_0^+(\zeta)}{\lambda_2(\zeta)} \bx_1^+(0,t - \Char_2(z) + \Char_2(\zeta)) \, \dm \zeta, \label{eq_sol_p2}
\end{subalign}
which depends on the boundary values $\bvx^-(0,t)$ and $\bvx^+(0,t)$, respectively.
The boundary value $\bvx^-(0,t)$ can be replaced with $\bvx^+(0,t)$ by using the BC \eqref{eq_sysBSC}, after which the solutions are rewritten in terms of $\bvx^+(1,t)$, exploiting the transport character of \eqref{eq_sysBSA} and \eqref{eq_sysBSB}.
This allows us to parametrize the state $\bvx(\cdot,t)$ by the flat output 
\begin{equation}
	\label{eq_def_y}
	\vy(t) = \bvx^+(1,t)
\end{equation}
resulting in
\begin{subalign}[eq_flatPar_x]
	\bx_1^-(z,t) &= - y_1(t + \tau_1 + \Char_1(z)) \label{eq_flatPar_xm1} \\
	\bx_2^-(z,t) &= - y_2(t + \tau_2 + \Char_2(z)) \nonumber \\
	& \;\, + \tint_0^1  \tfrac{a_0^+(\zeta)}{\lambda_2(\zeta)} y_1(t + \tau_1 + \Char_2(z) + \Char_2(\zeta)) \, \dm \zeta \nonumber \\
	& \;\, + \tint_0^z  \tfrac{a_0^-(\zeta)}{\lambda_2(\zeta)} y_1(t + \tau_1 + \Char_2(z) - \Char_2(\zeta)) \, \dm \zeta \label{eq_flatPar_xm2}
	\\
	\bx_1^+(z,t) &= y_1(t + \tau_1 - \Char_1(z)) \label{eq_flatPar_xp1} \\
	\bx_2^+(z,t) &= y_2(t + \tau_2 - \Char_2(z)) \nonumber \\
	& \;\, - \tint_z^1 \tfrac{a_0^+(\zeta)}{\lambda_2(\zeta)} y_1(t + \tau_1 - \Char_2(z) + \Char_2(\zeta)) \, \dm \zeta. \label{eq_flatPar_xp2}
\end{subalign}
The flatness-based parametrization of the input
\begin{subalign}[eq_flatPar_u]
	\bu_1(t) &= - y_1(t + 2 \tau_1) \label{eq_flatPar_u1} \\
	\bu_2(t) &= - y_2(t + 2 \tau_2) \nonumber \\
	& \phantom{=} + \tint_0^{\tau_2}  \ba_0^+(\tau) \, y_1(t + \tau_1 + \tau_2 + \tau) \, \dm \tau \nonumber \\
	& \phantom{=} + \tint_0^{\tau_2}  \ba_0^-(\tau) \, y_1(t + \tau_1 + \tau_2 - \tau) \, \dm \tau, \label{eq_flatPar_u2}
\end{subalign}
where $\ba_0^\mp(\tau) = a_0^\mp(\invChar_2(\tau))$, is obtained by substituting \eqref{eq_flatPar_xm1} and \eqref{eq_flatPar_xm2} into the BC \eqref{eq_sysBSD}.

The flatness-based parametrization \eqref{eq_flatPar_x} and \eqref{eq_flatPar_u} of the state and input of the hyperbolic system involves the flat output $\vy(t)$ at time $t$ as well as predictions $\vy(t+\tau)$, $\tau>0$, without any delays $\vy(t-\tau)$, $\tau>0$.
Specifically, it involves $y_1(t+\tau)$ for $\tau \in [0, \tau_1 + 2\tau_2]$, and $y_2(t+\tau)$ for $\tau \in [0, 2\tau_2]$.

\begin{remark}
	\label{rem_y_x}
	An intuitive interpretation of the flat output in terms of the original beam variables $w(z,t)$ and $\varphi(z,t)$ is not obvious, since according to \eqref{eq_VIT} and \eqref{eq_def_y}, $\vy(t) = \vx^+(1,t) - \tint_0^1 \mK^{+-}(1,\zeta) \vx^-(\zeta,t) + \mK^{++}(1,\zeta) \vx^+(\zeta,t) \, \dm \zeta$, where $\vx(z,t)$ itself already represents an exponentially scaled linear combination of $w(z,t)$, $\varphi(z,t)$ and their partial derivatives as defined in \eqref{eq_tf_x_chi}.
\end{remark}


\section{Hyperbolic Controller Form (HCF)}
\label{sec_HCF}

To obtain the HCF for the Timoshenko beam based on the input parametrization \eqref{eq_flatPar_u}, we define the HCF state
\begin{subalign}[eq_def_HCF_state]
	\eta_1(\tau,t) &= y_1(t + \tau), \quad \tau \in [0, 2\tau_1] \\
	\eta_2(\tau,t) &= y_2(t + \tau), \quad \tau \in [0, 2\tau_2]
\end{subalign}
in view of \eqref{eq_flatPar_x} and \eqref{eq_flatPar_u}.
Then, the components $\eta_i(\tau,t)$, $i=1,2$, satisfy homogeneous
transport equations $\partial_t \eta_i(\tau,t) = \partial_\tau \eta_i(\tau,t)$ with normalized, negative transport velocities on domains $\tau \in [0, 2\tau_i)$.
The respective outputs constitute the flat output components, i.e., $\eta_i(0,t) = y_i(t)$.
The BCs of the HCF are obtained from the input parametrization \eqref{eq_flatPar_u}, where expressions $y_i(t+\tau)$, $i=1,2$, for $\tau \in [0, 2\tau_i]$ can be replaced by the HCF state in \eqref{eq_def_HCF_state}.
However, for values $y_1(t+\tau)$ with prediction amplitudes $\tau > 2\tau_1$, this substitution is no longer valid.
Instead, applying a forward time shift to \eqref{eq_flatPar_u1} reveals that these values can be replaced by predictions of $u_1(t)$.
Hence, using the substitution 
\begin{subalign}[eq_y_to_eta]
	y_1(t+\tau) &= \begin{cases}
		\eta_1(\tau,t), & \tau \in [0, 2 \tau_1] \label{eq_y_to_eta1} \\
		- \bu_1(t + \tau - 2\tau_1), & \tau > 2\tau_1
	\end{cases} \\
	y_2(t + \tau) &= \eta_2(\tau,t), \qquad \qquad \qquad \tau \in [0, 2 \tau_2] \label{eq_y_to_eta2}
\end{subalign}
in \eqref{eq_flatPar_u} and subsequently solving for the boundary values $\eta_i(2\tau_i,t)$ provides the BCs of the HCF
\begin{subalign}[eq_HCF]
	\partial_t \eta_1(\tau,t) &= \partial_\tau \eta_1(\tau,t), \quad \quad \tau \in [0, 2\tau_1) \label{eq_HCF_PDE_1} \\
	\partial_t \eta_2(\tau,t) &= \partial_\tau \eta_2(\tau,t), \quad \quad \tau \in [0, 2\tau_2) \label{eq_HCF_PDE_2} \\
	\eta_1(2\tau_1, t) &= - \bu_1(t) \label{eq_HCF_BC_1} \\
	\eta_2(2\tau_2, t) &= - \bu_2(t) + \tint_{\tau_1}^{2\tau_1} \ba_0^-(\tau_1 + \tau_2 - \tau) \eta_1(\tau,t) \, \dm \tau \nonumber \\
	& \phantom{=} - \tint_{0}^{\Delta \tau + \tau_2} \ta_0(\tau) \bu_1(t + \tau)  \, \dm \tau, \label{eq_HCF_BC_2}
\end{subalign}
where $\Delta \tau = \tau_2 - \tau_1$ and $\ta_0(\tau) = \ba_0^-(\Delta \tau - \tau) h(\Delta \tau - \tau) + \ba_0^+(\tau - \Delta \tau) h (\tau - \Delta \tau)$ with the Heaviside function $h$.
Since the HCF consists of two transport equations where the input acts only through the BCs, the control design in this state representation is particularly simple.
\begin{Lemma}[HCF]
	\label{lem_HCF}
	There exists an invertible transformation that maps the state $\vx(\cdot,t)$ of \eqref{eq_sys_x} into the state $\veta(\cdot,t)$ of the HCF in \eqref{eq_HCF}.
\end{Lemma}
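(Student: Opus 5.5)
The plan is to build the transformation as a composition of two maps, $\vx \mapsto \bvx \mapsto \veta$, and to show that each one is invertible. The first map is the Volterra integral transformation \eqref{eq_VIT}. It is invertible, and its inverse kernel $\mK_I$ solves the Volterra integral equation given in the footnote. Its kernel $\mK$ exists as the piecewise $C^1(\mathcal D)$ solution of \eqref{eq_kernelEQS}. It therefore remains to construct an invertible map between $\bvx(\cdot,t)$ and $\veta(\cdot,t)$.

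For the forward map $\bvx \mapsto \veta$, I would invert the parametrization \eqref{eq_flatPar_x} pointwise in the characteristic coordinates, using the strictly monotone $\Char_i$ and their inverses $\invChar_i$.
\begin{itemize}
\item For $\eta_1$, \eqref{eq_flatPar_xp1} and \eqref{eq_flatPar_xm1} give directly $\eta_1(\tau,t) = \bx_1^+(\invChar_1(\tau_1-\tau),t)$ for $\tau\in[0,\tau_1]$ and $\eta_1(\tau,t) = -\bx_1^-(\invChar_1(\tau-\tau_1),t)$ for $\tau\in[\tau_1,2\tau_1]$. The two pieces agree at $\tau=\tau_1$ by the BC \eqref{eq_sysBSC}.
\item For $\eta_2$, I would solve \eqref{eq_flatPar_xp2} for $y_2(t+\tau)$ with $\tau\in[0,\tau_2]$, via $z=\invChar_2(\tau_2-\tau)$, and \eqref{eq_flatPar_xm2} for $\tau\in[\tau_2,2\tau_2]$, via $z=\invChar_2(\tau-\tau_2)$. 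In both cases $y_2$ equals $\pm\bx_2^\mp$ plus integral terms in $y_1$. Those values $y_1(t+s)$ with $s\le 2\tau_1$ are replaced by $\eta_1$, i.e. by $\bx_1^\mp$ as above. Those with $s>2\tau_1$ are replaced via \eqref{eq_y_to_eta1} by $-\bu_1(t+s-2\tau_1)$.
\end{itemize}

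The inverse map $\veta \mapsto \bvx$ is simply \eqref{eq_flatPar_x} combined with the substitution \eqref{eq_y_to_eta}. Composing it with the inverse Volterra map then yields $\vx = \mcalT^{-1}\veta$. I would verify that the two compositions are identities. For $\eta_1$ and $\bx_1^\mp$ this is immediate. For the second components it follows because the correction terms are identical in both directions: they are added on one side and subtracted on the other, and they depend only on the first components and on $\bu_1$.

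The main obstacle is that the HCF is not a classical state representation. The arguments of $y_1$ in the $\eta_2$ relations reach up to $t+\tau_1+2\tau_2$, beyond the horizon $2\tau_1$ covered by $\eta_1$. Hence both $\mcalT$ and its inverse depend on the input predictions $\bu_1(t+s)$, $s\in[0,\Delta\tau+\tau_2]$. I would handle this by stating invertibility for fixed $\bu_1$ on this prediction window, treated as a known signal: in the tracking setting it is the planned input, and it is kept fixed in the transformation just as in \eqref{eq_HCF_BC_2}. Under this convention the map is affine in $\bvx$, with the $\bu_1$-terms entering as an additive shift, and it is therefore bijective.

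Finally, I would check that the resulting $\veta$ indeed satisfies \eqref{eq_HCF}.
\begin{itemize}
\item The transport equations hold by the definition \eqref{eq_def_HCF_state}.
\item The BCs \eqref{eq_HCF_BC_1}--\eqref{eq_HCF_BC_2} follow from \eqref{eq_flatPar_u}. This requires splitting the integrals at the threshold $2\tau_1$, which is what produces $\ta_0$ and $\Delta\tau$.
\end{itemize}
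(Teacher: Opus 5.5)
Your proposal is correct and is essentially the paper's proof. The paper composes the invertible Volterra map \eqref{eq_VIT} with the explicit Appendix maps: \eqref{eq_TF_bx_eta} comes from inserting \eqref{eq_y_to_eta} into \eqref{eq_flatPar_x}, and \eqref{eq_TF_eta_bx} comes from evaluating at $z=\invChar_i(\cdot)$ and eliminating $\eta_1$ in favour of $\bx_1^\mp$, with $\bu_1$-predictions appearing in both directions exactly as you describe. Your remarks on the triangular structure and on treating $\bu_1$ on the window $[t,t+\Delta\tau+\tau_2]$ as a given signal only make explicit what the paper leaves implicit (minor slip: the signs pair as $+\bx_2^+$ and $-\bx_2^-$, not $\pm\bx_2^\mp$).
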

\begin{proof}
	The mappings between the HCF state $\veta(\cdot,t)$ and the state $\bvx(\cdot,t)$ are provided in the Appendix.
	The states $\bvx(\cdot,t)$ and $\vx(\cdot,t)$ are related through the invertible Volterra integral transformation \eqref{eq_VIT}.
\end{proof}

The HCF \eqref{eq_HCF} for the Timoshenko beam does not constitute a state representation in the classical sense, as it involves positive time shifts of the input.
Instead, we refer to \eqref{eq_HCF} as a \emph{generalized HCF with input predictions} (cf. \cite{Ecklebe2025}).
Since the system dynamics are equivalently represented by the HCF and the original state representation \eqref{eq_sys_x}, the corresponding mapping between the associated states also involves the input (see Appendix).



\section{State feedback tracking controller}
\label{sec_ctrl}

To design a tracking controller that stabilizes the transition of the Timoshenko beam, we assume given reference trajectories 
\begin{equation}
	\label{eq_yref}
	y_{i,\mathrm{r}}(t) = \begin{cases}
		y_{i,0}, & t < t_0 \\
		p_i(t), & t_0 \le t \le t_T\\ 
		y_{i,T}, & t > t_T,
	\end{cases}
\end{equation}
$i=1,2$, for the flat output $\vy(t)$, where $p_i(t)$ are continuous functions satisfying $p_i(t_0)=y_{i,0}$ and $p_i(t_T)=y_{i,T}$. 
The trajectory represents the transition of the system between the stationary solutions $(w_0(z), \varphi_0(z))$ and $(w_T(z), \varphi_T(z))$ of \eqref{eq_waveeqs}.
The associated initial and terminal values $\vy_0$ and $\vy_T$ are obtained via the preliminary transformations from Section~\ref{sec_derivFlatPar} (cf. Remark~\ref{rem_y_x}).
Notably, by evaluating the input parametrization \eqref{eq_flatPar_u} for $\vy(\cdot)=\vy_\mathrm{r}(\cdot)$, the corresponding feedforward control $\vu_\mathrm{r}(t)$ can be derived.
To stabilize the reference trajectory \eqref{eq_yref}, a tracking controller is derived. 
For this purpose, a new input $\vv(t)$ is introduced via the state feedback
\begin{subalign}[eq_decouplingFB]
	\bu_1(t) &= - v_1(t) \\
	\bu_2(t) &= - v_2(t) + \tint_{\tau_1}^{2\tau_1} \ba_0^-(\tau_1 + \tau_2 - \tau) \eta_1(\tau,t) \, \dm \tau \nonumber \\
	& \phantom{=} + \tint_{0}^{\Delta \tau + \tau_2} \ta_0(\tau) v_1(t + \tau)  \, \dm \tau,
\end{subalign}
which decouples the HCF \eqref{eq_HCF} into two independent systems of transport equations
\begin{subalign}[eq_Bruno]
	\partial_t \eta_i(\tau,t) &= \partial_\tau \eta_i(\tau,t), \quad \quad \tau \in [0, 2\tau_i) \\
	\eta_i(2\tau_i,t) &= v_i(t) \label{eq_Bruno_BC}
\end{subalign}
for $i=1,2$.
This structure can be interpreted as the hyperbolic counter-part to the \emph{Brunovsky} form (see, e.g., \cite{Isidori1985}).
The control design for the system in the form \eqref{eq_Bruno} is particularly simple, since it essentially corresponds to designing controllers for two independent SI systems (see, e.g., \cite{Woittennek2013}).
To this end, we define the tracking errors
\begin{equation}
	\label{eq_error}
	e_i(t + \tau) = \eta_i(\tau,t) - y_{i,\mathrm{r}}(t + \tau), \quad i=1,2.
\end{equation}
For simplicity, consider decoupled error dynamics
\begin{equation}
	\label{eq_errorDyn}
	e_i(t+2\tau_i) + \gamma_i e_i(t) + \tint_{0}^{2\tau_i} \alpha_i(\tau) e_i(t+\tau) \, \dm \tau = 0
\end{equation}
for the closed-loop system, which involve both lumped and distributed evaluations of the error components.
It is shown in \cite[Theorem~4.1]{Russell1991} that each of the SI subsystems $i=1,2$ in \eqref{eq_errorDyn} is a Riesz spectral system if $\gamma_i \ne 0$ and $\alpha_i(\tau)$ is piecewise continuous on the corresponding interval $[0, 2\tau_i]$.
If these conditions hold for both subsystems, exponential stability of the closed-loop dynamics \eqref{eq_errorDyn} and thus convergence to the reference is guaranteed if the roots $s_k \in \mathbb C$, $k\in \mathbb N$, of $\mathrm{e}^{2\tau_i s} + \gamma_i + \tint_{0}^{2 \tau_i} \alpha_i(\tau) \mathrm{e}^{ \tau s} \, \dm \tau $
satisfy $\Re(s_k)<0$ (see \cite[Theorem~3.2.8]{Curtain2020}).
As choosing appropriate parameters $\gamma_i$ as well as functions $\alpha_i(\tau)$, $i=1,2$, is not the focus of this contribution, without loss of generality, we henceforth assume $\alpha_i(\tau) = 0$.
In this case, $|\gamma_i| < 1$ guarantees exponential stability and the subsequent expressions are more concise and readable.

For $\alpha_i(\tau) = 0$ and $|\gamma_i| < 1$, $i=1,2$, applying
\begin{equation}
	\label{eq_FB_v}
	v_i(t) = y_{i,\mathrm{r}}(t+2\tau_i) - \gamma_i(\eta_i(0,t) - y_{i,\mathrm{r}}(t))
\end{equation}
to \eqref{eq_Bruno} results in the decoupled error dynamics \eqref{eq_errorDyn}.
To obtain the corresponding control law for the input $\bvu(t)$, the feedback \eqref{eq_FB_v} is inserted into the input transformation \eqref{eq_decouplingFB}.
This substitution, however, also requires
\begin{align}
	\label{eq_v_pred}
	v_1(t+\tau) & = y_{1,\mathrm{r}}(t+2\tau_1+\tau) - \gamma_1 \big( \eta_1(0,t+\tau) \nonumber \\
	& \phantom{=} - y_{1,\mathrm{r}}(t+\tau) \big)
\end{align}
for $\tau \in [0,\Delta \tau + \tau_2]$.
Predictions $y_{1,\mathrm{r}}(t+\sigma)$, $\sigma>0$, of the reference trajectory are readily available, regardless of the prediction amplitude (cf. \eqref{eq_yref}).
The term $\eta_1(0,t+\tau)$ is replaced using
\begin{equation}
	\label{eq_subs_eta1_pred}
	\eta_1(0,t+\tau) = \left\{ \begin{aligned}
		& \eta_1(\tau,t), && \tau \in [0,2\tau_1] \\
		-& \gamma_1 \eta_1(0,t+\tau-2\tau_1) \\
		&+ f(y_{1,\mathrm{r}}(\cdot)),
		&& \tau > 2\tau_1, \\
	\end{aligned} \right.
\end{equation}
where $f(y_{1,\mathrm{r}}(\cdot))$ collects all reference-related terms.
For $\tau \in [0,2\tau_1]$, the prediction $\eta_1(0,t+\tau)$ is directly expressed in terms of the state component $\eta_1(\tau,t)$ by exploiting the transport property of \eqref{eq_HCF_PDE_1}, which also holds in the closed loop.
If $\tau > 2\tau_1$, the error dynamics \eqref{eq_errorDyn} for the first error component together with the definition \eqref{eq_error} are employed.
This reduces the prediction amplitude by $2\tau_1$.
Importantly, the substitution rule \eqref{eq_subs_eta1_pred} can be applied successively.
It terminates after a finite number of steps.
Hence, $\eta_1(0,t+\tau)$ can always be expressed in terms of the state component $\eta_1(\tau,t)$ and the reference $y_{i,\mathrm{r}}(t)$.
%
%
%
%
Therefore, the feedback that results from inserting \eqref{eq_FB_v} and \eqref{eq_v_pred} with \eqref{eq_subs_eta1_pred} into \eqref{eq_decouplingFB}, is a feedback of the HCF state $\veta(\cdot,t)$.

\begin{remark}
	\label{rem_quasistat}
	As the HCF \eqref{eq_HCF} involves input predictions $\bu_1(t+\tau)$, $\tau>0$, and thus is not a state representation in the classical sense, the decoupling feedback \eqref{eq_decouplingFB} likewise depends on predicted input values and is therefore non-static.
	It can be interpreted as \emph{quasi-static} feedback in the context of hyperbolic systems.
\end{remark}

Finally, to express the control law in the original beam coordinates, the transformations in Section~\ref{sec_trafos} and the Appendix leading to the HCF are applied in reverse order:
First, the state transformation \eqref{eq_TF_eta_bx} is used to substitute $\veta(\cdot,t)$ with $\bvx(\cdot,t)$, which can in turn be expressed by $\vx(\cdot,t)$ using \eqref{eq_VIT}.
Next, the feedback for $\bvu(t)$ is inserted into \eqref{eq_fb_u} to obtain the control law for $\vu(t)$, and finally, \eqref{eq_tf_x_chi} from Section~\ref{sec_trafos} is applied to replace $\vx(\cdot,t)$ with the beam variables $w(z,t)$, $\varphi(z,t)$ and their partial derivatives.
The main result is summarized in the following theorem.
\begin{theorem}
	\label{th_ctrl}	
	The controller comprising \eqref{eq_decouplingFB} and \eqref{eq_FB_v}--\eqref{eq_subs_eta1_pred}, with parameters $|\gamma_i|<1$, $i=1,2$, ensures exponential convergence of the state $\veta(\cdot,t)$ of the HCF \eqref{eq_HCF} to the reference $\vy_{\mathrm{r}}(\cdot)$, and hence of $\vx(\cdot,t)$ in \eqref{eq_sys_x} to $\vx_{\mathrm{r}}(\cdot,t)$.
\end{theorem}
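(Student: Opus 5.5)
We prove the statement in three stages: closed-loop error dynamics in HCF coordinates, their exponential decay, and transfer of that decay back to $\vx(\cdot,t)$.

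\emph{Stage 1: decoupling.} Substituting the feedback \eqref{eq_decouplingFB} into the boundary conditions \eqref{eq_HCF_BC_1}--\eqref{eq_HCF_BC_2} gives directly $\eta_1(2\tau_1,t)=v_1(t)$. For the second component, the distributed $\eta_1$-terms cancel. The input-prediction term also cancels: $-\bu_1(t+\tau)=v_1(t+\tau)$, so its integral cancels the $\tint \ta_0 v_1$ term. This yields the Brunovsky-type form \eqref{eq_Bruno}.

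\emph{Stage 2: well-definedness of the feedback.} We must check that the predicted values $v_1(t+\tau)$, $\tau\in[0,\Delta\tau+\tau_2]$, are genuinely a function of the current HCF state and the reference. We argue by induction on $k=\lceil \tau/(2\tau_1)\rceil$ using the substitution rule \eqref{eq_subs_eta1_pred}.
\begin{itemize}
\item For $k\le 1$, the transport property of \eqref{eq_HCF_PDE_1} gives $\eta_1(0,t+\tau)=\eta_1(\tau,t)$.
\item For larger $k$, the closed-loop first error equation (i.e.\ \eqref{eq_errorDyn} with $\alpha_1=0$) lowers the prediction amplitude by $2\tau_1$.
\end{itemize}
Because $\Delta\tau+\tau_2$ is finite, the recursion terminates. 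The resulting expression is a finite sum of terms $(-\gamma_1)^j\eta_1(\tau-2j\tau_1,t)$ plus reference terms. This confirms that the controller is a feedback of $\veta(\cdot,t)$.

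\emph{Stage 3: exponential decay in HCF coordinates.} Insert \eqref{eq_FB_v} into \eqref{eq_Bruno_BC} and use $\eta_i(2\tau_i,t)=y_i(t+2\tau_i)$ together with the definition \eqref{eq_error}. This gives the difference equation $e_i(t+2\tau_i)=-\gamma_i e_i(t)$. Its solution satisfies $|e_i(t+2k\tau_i+s)|=|\gamma_i|^k|e_i(t+s)|$. Hence, with respect to the $L^2$ (or sup) norm on $[0,2\tau_i]$,
\[
\|\eta_i(\cdot,t)-y_{i,\mathrm{r}}(t+\cdot)\|\le M\mathrm{e}^{-\omega t}\,\|\eta_i(\cdot,0)-y_{i,\mathrm{r}}(\cdot)\|, \qquad \omega=-\ln|\gamma_i|/(2\tau_i)>0 .
\]
Alternatively, one can cite \cite[Theorem~4.1]{Russell1991} and \cite[Theorem~3.2.8]{Curtain2020}: the roots of $\mathrm{e}^{2\tau_i s}+\gamma_i$ satisfy $\Re s=\ln|\gamma_i|/(2\tau_i)<0$.

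\emph{Stage 4: transfer to $\vx$.} This is the main obstacle. Let $\vx_{\mathrm r}$ denote the image of $(\vy_{\mathrm r},\bu_{1,\mathrm r})$ under the inverse maps of Lemma~\ref{lem_HCF}. By linearity, $\vx-\vx_{\mathrm r}$ is the image of the HCF error together with the input error $\bu_1-\bu_{1,\mathrm r}$ on a finite prediction window. This is needed because the Appendix map \eqref{eq_TF_eta_bx} involves $\bu_1(t+\tau)$, so bounded invertibility of the state map alone does not suffice.

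The input error is handled as follows. In closed loop, $\bu_1(t+\tau)-\bu_{1,\mathrm r}(t+\tau)=-(v_1-v_{1,\mathrm r})(t+\tau)=\gamma_1 e_1(t+\tau)$. Here $e_1$ on the finite window is bounded by a constant times $\mathrm{e}^{-\omega t}$, by Stage 3.

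The remaining maps are bounded linear operators:
\begin{itemize}
\item the $\veta\to\bvx$ map: compositions of shifts and integral operators with bounded kernels $\ba_0^\mp$;
\item the inverse Volterra transformation \eqref{eq_VIT}, with piecewise $C^1$ kernel $\mK_I$.
\end{itemize}
Hence $\|\vx(\cdot,t)-\vx_{\mathrm r}(\cdot,t)\|\le C\mathrm{e}^{-\omega t}$. This completes the proof.
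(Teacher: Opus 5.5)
Your proof is correct and follows the same route as the paper. The decoupling feedback gives the closed-loop error dynamics $e_i(t+2\tau_i)=-\gamma_i e_i(t)$, which decay exponentially for $|\gamma_i|<1$, and the decay carries over to $\vx$ through the inverse HCF map and the inverse Volterra transformation (Lemma~\ref{lem_HCF}). Your explicit treatment of the input-prediction terms via $\bu_1-\bu_{1,\mathrm r}=\gamma_1 e_1$ fills in a step the paper's proof leaves implicit; note only that your formula for $\omega$ is undefined when $\gamma_i=0$, where convergence is in finite time and any $\omega>0$ works.
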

\begin{proof}
	Applying \eqref{eq_decouplingFB} and \eqref{eq_FB_v}--\eqref{eq_subs_eta1_pred} to the HCF \eqref{eq_HCF} results in the closed-loop error dynamics \eqref{eq_errorDyn} with $\alpha_i(\tau)=0$, $i=1,2$.
	Choosing $|\gamma_i|<1$ ensures that $e_i(t)$ converges to zero and thus $\eta_i(\tau,t)$ to $\eta_{i,\mathrm{r}}(\tau,t)=y_{i,\mathrm{r}}(t+\tau)$, $i=1,2$, for $t \to \infty$.
	This implies exponential convergence of $\vx(z,t)$ to $\vx_{\mathrm{r}}(z,t)$ according to Lemma~\ref{lem_HCF}, where $\vx_{\mathrm{r}}(z,t)$ is obtained by inserting $\vy_{\mathrm{r}}(t)$ into \eqref{eq_flatPar_x} and the inverse of \eqref{eq_VIT}. 
\end{proof}


\section{Simulation results}
\label{sec_sim}

We validate the proposed control design for the Timoshenko beam \eqref{eq_waveeqs}--\eqref{eq_BCwave_1} with constant parameters $\kappa=1.25$, $S=1$, $\rho = 0.8$, $EI = 0.5$ and $J=0.98$ in simulation.
These parameters yield propagation speeds $\lambda_1 = \tfrac 54$ and $\lambda_2 = \tfrac 57$ in \eqref{eq_sys_x}, corresponding to transport times $\tau_1 = 0.8$ and $\tau_2 = 1.4$.
The reference trajectory is defined as in \eqref{eq_yref} with third-degree polynomials $p_i(z)$, $i=1,2$, as well as $t_0 = 4.5$, $T=t_T-t_0=1$, $\vy_0 = \vect 0$ and $\vy_T=[0.1,\,-0.2]^T$.
Trajectory tracking is achieved using the feedback law from Theorem~\ref{th_ctrl}.
For simplicity, the control parameters $\gamma_1 = \gamma_2 = 0$ are chosen, ensuring finite-time closed-loop stability.
Even in that case, due to \eqref{eq_decouplingFB} and \eqref{eq_fb_u}, the feedback involves the entire system state.

The Timoshenko beam is simulated with ICs $x_{i,0}^\mp(z)=\pm 0.2\sin^2(2\pi i z)$, $i=1,2$.
Figure~\ref{fig_res_y} illustrates the results for the flat output $\vy(t)$.
For each component $y_i(t)$, $i=1,2$, the initial errors due to $\vx(z,0)\ne \vect 0$ are compensated within the time interval $[0,2\tau_i)$, demonstrating finite-time convergence as $e_i(t)=0$ for $t \ge 2\tau_i$.
Subsequently, the controller stabilizes the transition of the system.
For demonstration purposes the starting time $t_0$ is chosen sufficiently large so that $\ve(t)=\vect 0$ before the transition starts.
For $t\ge t_T=5.5$, the transition is completed, and the flat output remains constant at $\vy(t)=\vy_T$.
Figure~\ref{fig_res_x} shows the corresponding results for the transverse beam displacement $w(z,t)$, $z\in[0,1]$, where the terminal stationary configuration $w_T(z)$ corresponding to $\vy_T$ can be clearly observed for $t\ge 5.5$.
The evolution of the rotation angle $\varphi(z,t)$ is omitted.


\begin{figure}[t]
	\centering
	\includegraphics[width=\columnwidth]{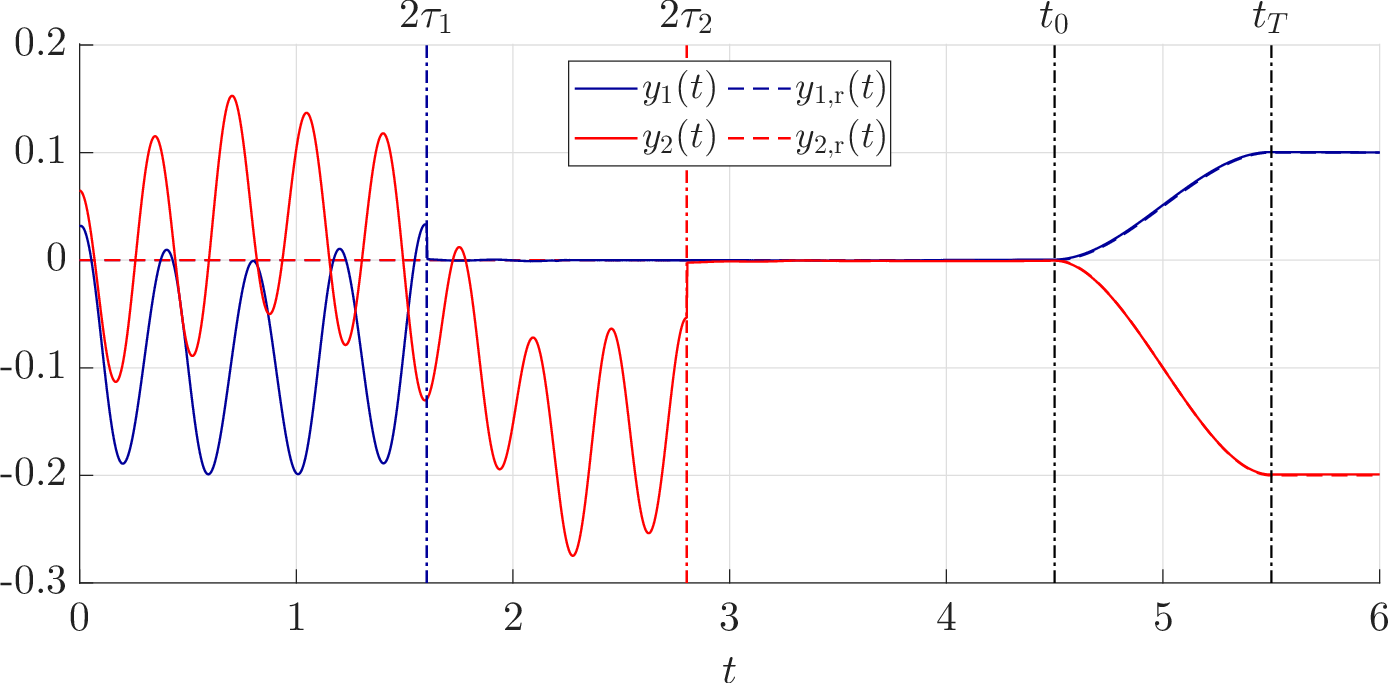}	%
	\caption{Convergence of the flat output $\vy(t)$ to its reference $\vy_{\mathrm{r}}(t)$.}
	\label{fig_res_y}
\end{figure}

\begin{figure}[t]
\centering
\includegraphics[width=\columnwidth]{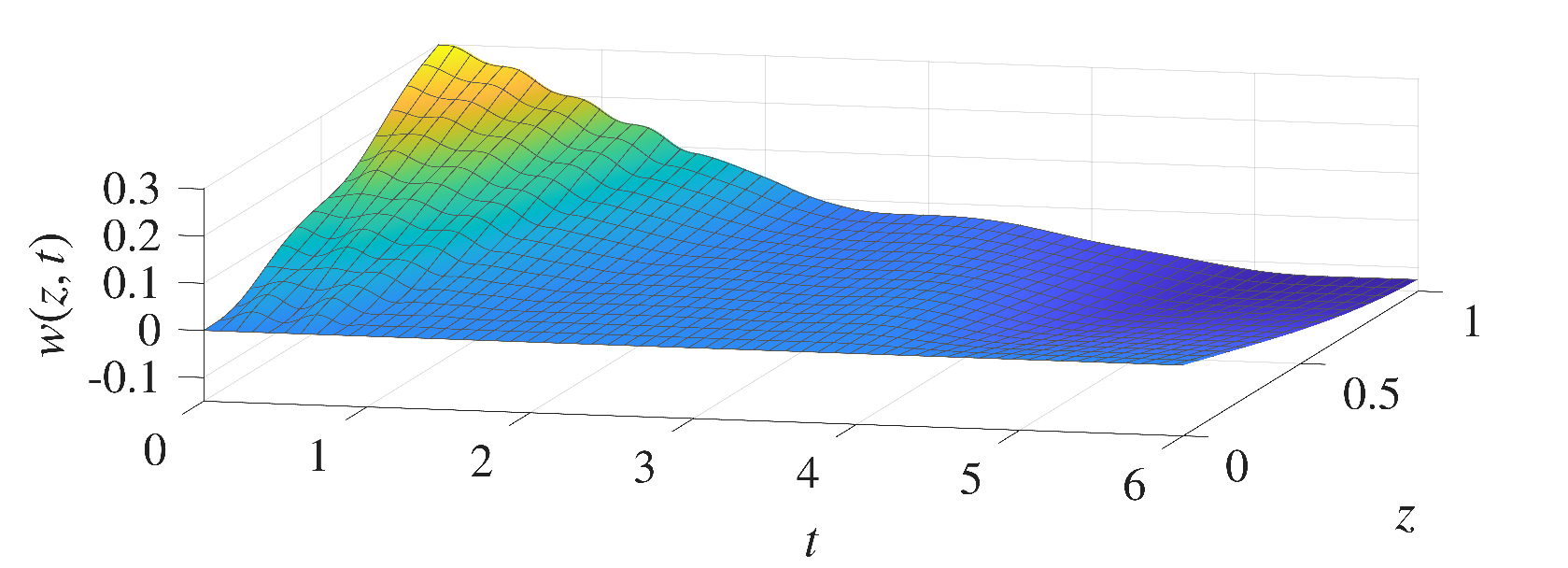} %
\caption{Beam displacement $w(z,t)$ during the transition between two stationary solutions.}
\label{fig_res_x}
\end{figure}

\begin{figure*}
	\begin{subalign}[eq_TF_bx_eta]		
		\bx_1^+(z,t) &= \eta_1(\tau_1 - \Char_1(z), t) \label{eq_TF_bx_eta_p1} \\
		\bx_2^+(z,t) &= \eta_2(\tau_2 - \Char_2(z), t) - \tint_{\tau_1}^{\min(2\tau_1, \tau_1 + \tau_2 - \Char_2(z))} \ba_{0,1}^+ (z,\tau) \eta_1(\tau, t) \, \dm \tau + \tint_{\min(0,\Delta \tau - \Char_2(z))}^{\Delta \tau - \Char_2(z)} \ba_{0,2}^+ (z, \tau ) \bu_1(t + \tau) \, \dm \tau \label{eq_TF_bx_eta_p2} \\
		\bx_1^-(z,t) &= - \eta_1(\tau_1 + \Char_1(z), t) \label{eq_TF_bx_eta_m1} \\
		\bx_2^-(z,t) &= - \eta_2(\tau_2 + \Char_2(z), t) + \tint_{\tau_1 + \Char_2(z)}^{\max(2\tau_1, \tau_1 + \Char_2(z))}  \ba_{0,3}^+ (z,\tau ) \eta_1(\tau, t) \, \dm \tau + \tint_{\tau_1}^{\min(2\tau_1, \tau_1 + \Char_2(z))}  \ba_{0,1}^-(z,\tau) \eta_1(\tau, t) \, \dm \tau \nonumber \\
		& \phantom{=} - \tint_{\max(0,\Char_2(z) - \tau_1)}^{\Delta \tau + \Char_2(z)}  \ba_{0,4}^+ (z,\tau ) \bu_1(t + \tau) \, \dm \tau - \tint_{\min(0,\Char_2(z)-\tau_1)}^{\Char_2(z)-\tau_1}  \ba_{0,2}^-(z, \tau) \bu_1(t + \tau) \, \dm \tau \label{eq_TF_bx_eta_m2}
	\end{subalign}
	\vspace*{-10pt}
	\begin{subalign}[eq_TF_eta_bx]		
		\eta_1(\tau,t) &= \begin{cases}
			\bx_1^+(\invChar_1(\tau_1-\tau),t), & \tau \in [0,\tau_1) \\
			- \bx_1^-(\invChar_1(\tau - \tau_1),t), & \tau \in [\tau_1, 2\tau_1)
		\end{cases} \label{eq_TF_eta_bx_1} \\
		\eta_2(\tau,t) &= \left\{ \begin{aligned}
			& \bx_2^+(\invChar_2(\tau_2-\tau),t) - \tint_{\tau_1}^{\min(2\tau_1,\tau_1+\tau)} b_{0,1}^+(\tau,\sigma) \bx_1^-(\invChar_1(\sigma - \tau_1),t) \, \dm \sigma \\
			& \qquad - \tint_{\min(0,\tau-\tau_1)}^{\tau-\tau_1} b_{0,2}^+(\tau,\sigma) \bu_1(t+\sigma) \, \dm \sigma,
			&& \quad \tau \in [0,\tau_2) \\
			& - \bx_2^-(\invChar_2(\tau - \tau_2),t) - \tint_{\tau - \Delta \tau}^{\max(2\tau_1,\tau - \Delta \tau)} b_{0,1}^+(\tau,\sigma) \bx_1^-(\invChar_1(\sigma - \tau_1),t) \, \dm \sigma \\
			& \qquad - \tint_{\tau_1}^{\min(2\tau_1,\tau - \Delta \tau)} b_{0,1}^-(\tau,\sigma) \bx_1^-(\invChar_1(\sigma - \tau_1),t) \, \dm \sigma \\
			& \qquad - \tint_{\max(0,\tau - \tau_2 - \tau_1)}^{\tau-\tau_1} b_{0,2}^+(\tau,\sigma) \bu_1(t+\sigma) \, \dm \sigma  - \tint_{\min(0,\tau - \tau_2 - \tau_1)}^{\tau - \tau_2 - \tau_1} b_{0,2}^-(\tau,\sigma) \bu_1(t+\sigma) \, \dm \sigma,
			&& \quad \tau \in [\tau_2, 2\tau_2)
		\end{aligned} \right. \label{eq_TF_eta_bx_2}
	\end{subalign}
	\vspace*{-15pt}
\end{figure*}


\section{Concluding remarks}
\label{sec_concl}

This paper presents a flatness-based control design approach for hyperbolic systems that employs the HCF, a special state representation that greatly simplifies the design of stabilizing feedback laws.
The concept is demonstrated for the classical Timoshenko beam, but can be extended to general hyperbolic systems.
Since the HCF derived here constitutes a generalized state representation, a particular point of future research is whether a classical state representation free of input predictions can be obtained through an alternative choice of the HCF state.



\section*{Appendix: HCF state transformation}

The mapping from $\veta(\cdot,t)$ to $\bvx(z,t)$ is obtained by applying \eqref{eq_y_to_eta} to \eqref{eq_flatPar_x}, which yields \eqref{eq_TF_bx_eta} with kernel functions
$\ba_{0,1}^+ (z,\tau) = \ba_0^+ (\Char_2(z) - \tau_1 + \tau )$, 
$\ba_{0,2}^+ (z,\tau) = \ba_0^+ (\Char_2(z) + \tau_1 + \tau )$, 
$\ba_{0,3}^+ (z,\tau) = \ba_0^+ (-\Char_2(z) - \tau_1 + \tau )$, 
$\ba_{0,4}^+ (z,\tau) = \ba_0^+ (-\Char_2(z) + \tau_1 + \tau )$, 
$\ba_{0,1}^- (z,\tau) = \ba_0^- (\Char_2(z) + \tau_1 - \tau )$, and 
$\ba_{0,2}^- (z,\tau) = \ba_0^- (\Char_2(z) - \tau_1 - \tau )$.

Based on this result, the inverse mapping \eqref{eq_TF_eta_bx} can be derived, where 
$b_{0,1}^+(\tau,\sigma) = \ba_0^+(\Delta \tau - \tau + \sigma)$, 
$b_{0,2}^+(\tau,\sigma) = \ba_0^+(\tau_1 + \tau_2 - \tau + \sigma)$, 
$b_{0,1}^-(\tau,\sigma) = \ba_0^-(-\Delta \tau + \tau - \sigma)$, and 
$b_{0,2}^-(\tau,\sigma) = \ba_0^-(-\tau_1 - \tau_2 + \tau - \sigma)$.
The mapping \eqref{eq_TF_eta_bx_1} from $\bvx(z,t)$ to the first HCF state component $\eta_1(\cdot,t)$ follows from evaluating \eqref{eq_TF_bx_eta_p1} for $z = \invChar_1(\tau_1 - \tau)$ and \eqref{eq_TF_bx_eta_m1} for $z = \invChar_1(\tau - \tau_1)$, respectively.
The mapping \eqref{eq_TF_eta_bx_2} for the second HCF state component is derived from \eqref{eq_TF_bx_eta_p2} and \eqref{eq_TF_bx_eta_m2} by replacing $\eta_1(\cdot,t)$ with $\bx_1^\mp(\cdot,t)$ using \eqref{eq_TF_eta_bx_1} followed by analogous substitutions for $z$ as for \eqref{eq_TF_bx_eta_p1} and \eqref{eq_TF_bx_eta_m1}.

\printbibliography

\end{document}